\theoremstyle{plain}
\newtheorem{theorem}{Theorem}[section]
\newtheorem{lemma}[theorem]{Lemma}
\newtheorem{observation}[theorem]{Observation}
\newtheorem{construction}[theorem]{Construction}
\newtheorem{definition}[theorem]{Definition}
\newcommand{\ER}[0]{$\exists\mathbb{R}$\xspace}
\newcommand{\SECTOR}[0]{\textsc{Sector}\xspace}
\newcommand{\STRETCH}[0]{\textsc{Simple-Stretchability}\xspace}
\newcommand{\NSTRETCH}[0]{\textsc{Stretchability}\xspace}
\newcommand{\R}[0]{\mathbb{R}}
\newcommand{\eps}{\varepsilon}
\newcommand{\Wlog}{W\@.l\@.o\@.g\@.\xspace}
\DeclareMathOperator{\dist}{dist}
\crefname{construction}{Construction}{Constructions}
\Crefname{construction}{Construction}{Constructions}
\Crefname{observation}{Observation}{Observations}
\title{Recognizing Generalized Transmission Graphs of Line 
  Segments and Circular Sectors\thanks{Supported in part 
  by ERC StG 757609.}}
\author{%
  Katharina Klost\footnote{%
    Institut f\"ur Informatik, Freie Universit\"at Berlin, 
    Germany \texttt{\{kathklost, mulzer\}@inf.fu-berlin.de}}
   \and
Wolfgang Mulzer\footnotemark[2]}
\begin{document}

\maketitle
\begin{abstract}
Suppose we have an arrangement $\mathcal{A}$ of $n$ geometric 
objects $x_1, \dots, x_n \subseteq \R^2$ in the plane, with a
distinguished point $p_i$ in each object $x_i$.
The \emph{generalized transmission graph} of 
$\mathcal{A}$ has vertex set $\{x_1, \dots, x_n\}$ and  a 
\emph{directed} edge $x_ix_j$ if and only if $p_j \in x_i$.
Generalized transmission graphs provide a generalized model of
the connectivity in networks of directional antennas.

The complexity class $\exists \mathbb{R}$ contains all problems that 
can be reduced in polynomial time to an existential sentence of the 
form $\exists x_1, \dots, x_n: \phi(x_1,\dots, x_n)$, 
where $x_1,\dots, x_n$ range over $\mathbb{R}$ and $\phi$ is 
a propositional formula with signature $(+, -, \cdot, 0, 1)$. 
The class $\exists \mathbb{R}$ aims to capture the complexity of 
the \emph{existential theory of the reals}. It 
lies between $\mathbf{NP}$ and $\mathbf{PSPACE}$.

Many geometric decision problems, such as recognition of 
disk graphs and of intersection graphs of lines, 
are complete for $\exists \mathbb{R}$. 
Continuing this line of research, we show that the recognition 
problem of generalized transmission graphs of line segments and 
of circular sectors is hard for $\exists \mathbb{R}$.
As far as we know, this constitutes the first such result for a 
class of \emph{directed} graphs.
\end{abstract}

\section{Introduction}

Let $\mathcal{A}$ be an arrangement of $n$ geometric objects
$x_1, \dots, x_n$ in the plane. The \emph{intersection graph} of 
$\mathcal{A}$ has one vertex for each object and an undirected edge 
between two objects $x_i$ and $x_j$ if and only if $x_i$ and $x_j$ 
intersect. In particular, if the objects are (unit) disks, we
speak of \emph{(unit) disk graphs}. These are often used as a 
symmetric model for antenna reachability. In some cases, however, 
this symmetry is not desired, since it does not accurately model the
properties of the network. For omnidirectional antennas, there is 
an asymmetric model called 
\emph{transmission graphs}~\cite{kaplan_spanners_2015}.
Transmission graphs are also defined on disks: 
as in disk graphs, there is one vertex per disk, 
and the edges indicate directed reachability. There is a \emph{directed} 
edge between two disks if the first disk contains the center of the second disk. 

Here, we present a new class of \emph{generalized transmission graphs}. 
Now, the objects may be arbitrary sets in $\R^2$, and the 
points that decide about the existence of an edge can be 
arbitrary points in the objects. 

For a given graph class, the \emph{recognition problem} is as follows:
given a combinatorial graph $G = (V, E)$, decide whether $G$ belongs
to this class. For the recognition of geometrically defined graphs, 
it turned out that the complexity class \ER plays a major role.
The class \ER was formally introduced by 
Schaefer~\cite{schaefer_complexity_2009}. It consists of all problems 
that are polynomial-time reducible to the set of all true sentences 
of the form $\exists x_1,\dots, x_n: \Phi(x_1,\dots,x_n)$. Here, 
$\Phi$ is a quantifier-free formula with signature 
$(+,-,\cdot, 0,1)$ additional to the standard boolean signature. The variables range over the reals. Hardness for this class is defined via polynomial reduction.

There are multiple classes of intersection graphs for which
the recognition problem is \ER-complete. Kang and M\"uller 
showed this for intersection graphs of 
$k$-spheres~\cite{kang_sphere_2012}, and Schaefer 
proved a similar result for intersection graphs of 
line segments and convex sets~\cite{schaefer_complexity_2009}.

One prototypical \ER-complete problem that serves
as the starting point of many reductions is \NSTRETCH, 
which was among the first known \ER-hard problems. 
The original hardness-proof is due to 
Mn\"ev~\cite{mnev_realizability_1985}, and it was restated 
in terms of \ER by Matou\v{s}ek~\cite{matousek_intersection_2014}. 

Here, we show that the recognition of generalized 
transmission graphs of line segments and of a certain
type of arrangements of circular sectors is hard for \ER.
For this, we need to extend the known proofs significantly, and we need to develop new tools to reason about geometric realizations of directed graphs.
With some further work the inclusion of these problems in \ER could be shown. For details see the master thesis of the first author \cite{klost_complexity_2017}.

\section{Preliminaries}

\subsection{Graph classes}
Let $x_1, \dots, x_n \subseteq \mathbb{R}^2$ be a set of $n$
objects, and suppose that there is a distinguished point
$p(x_i) \in x_i$, in every object $x_i$. The 
\emph{generalized transmission graph} of these objects 
is a directed graph $G = (V, E)$ with 
\[
V =\{x_1, \dots, x_n\}
\text{ and }E=\{(x_i,x_j) \mid p(x_j) \in x_i, 1 \leq i, j \leq n\}.
\]

We will consider generalized transmission graphs for 
line segments and circular sectors. In these cases,
the distinguished points $p(x_i)$ are defined as follows: 
for line segments, we choose one fixed endpoint; 
for circular sectors, we choose the apex. 

When constructing arrangements of line segments and of circular 
sectors below, in \Cref{sec:linesegments,sec:circularsectors}, 
we need some notation.
A line segment $\ell$ is described by an \emph{endpoint} 
$p(\ell)$, a \emph{length} $r(\ell)$, and a \emph{direction} $u(\ell)$.
A \emph{circular sector} $c$ is presented by an \emph{apex} 
$p(c)$, a \emph{radius} $r(c)$, an \emph{opening angle} 
$\alpha(c)$, and a direction $u(c)$. 
The direction is a vector in $\R^2$, and it indicates 
the direction of the bisector. We will call the bounding 
line segments the \emph{outer} line segments of $c$. Let 
$B(c)$ be the smallest rectangle with two sides parallel 
to $u(c)$ that contains $c$, the \emph{bounding box} of $c$.

\subsection{Stretchability and combinatorial descriptions}

Let $\mathcal{L}$ be an arrangement of $n$ non-vertical lines,
such that no two lines in $\mathcal{L}$ are parallel. We define 
the \emph{combinatorial description} $D(\mathcal{L})$
of $\mathcal{L}$ as follows: 

Let $g$ be a vertical line that lies to the left of all 
intersection points of $\mathcal{L}$. We number the lines 
$\ell_1, \dots, \ell_n$ in the order in which they intersect $g$, 
from top to bottom. This ordering corresponds to the ascending order 
of the slopes. For each line $\ell_i$, $i = 1, \dots, n$, 
we have a list $O^i$ of the following form:
\begin{align*}
O^i& = (o^i_1, \dots, ,o^i_k) & o_j^i &\subseteq \{1, \dots, ,n\}\\
\bigcup_{j = 1}^k o_j^i &= \{1,...,n\}& o_j^i \cap o_{j'}^{i} &= 
\emptyset, \text{ for } j \neq j'.
\end{align*}
For $i = 1, \dots, n$, the order of the indices in 
$O^i$ indicates the order in which the lines $\ell_j$ cross 
$\ell_i$, as we travel along $\ell_i$ from left to right. 
The lists $O^i$, for $i = 1, \dots, n$, form the 
\emph{combinatorial description}  of the arrangement $\mathcal{L}$. 
If $\mathcal{L}$ is simple, each $o_j^i$ is a singleton. 

Given a combinatorial description $\mathcal{D}$ as above,
it is relatively easy to detect whether it comes from an
arrangement of \emph{pseudo-lines}. This can be done by checking
a few simple axioms~\cite{Knuth92}.
However, the decision problem \NSTRETCH
of deciding if $\mathcal{D}$ originates from an
actual arrangement of \emph{lines} turns out to be
significantly harder.
If all sets $o_j^i$ are singletons, the same problem is called \STRETCH. 
Both variants of the problem are complete for 
\ER~\cite{matousek_intersection_2014,mnev_realizability_1985}.

\section{Line segments}\label{sec:linesegments}

We now present our first result on the recognition of
intersection graphs of line segments.

\begin{theorem}\label{thm:1darc}
Recognizing a generalized transmission graph
of line segments is \ER-hard.
\end{theorem}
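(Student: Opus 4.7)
The plan is to reduce from \STRETCH, which is \ER-complete. Given a simple pseudo-line combinatorial description $\mathcal{D}=(O^1,\dots,O^n)$, I will build in polynomial time a directed graph $G_\mathcal{D}$ that is a generalized transmission graph of line segments if and only if $\mathcal{D}$ is stretchable.

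The graph $G_\mathcal{D}$ uses three families of vertices. For every pseudo-line $\ell_i$, introduce a \emph{line vertex} $L_i$, to be realized by a long segment along a stretching of $\ell_i$. For every crossing of $\ell_i$ and $\ell_j$, introduce a \emph{crossing vertex} $Q_{ij}$, to be realized as a tiny segment; the edges $(L_i,Q_{ij})$ and $(L_j,Q_{ij})$ force $p(Q_{ij})\in L_i\cap L_j$, so the distinguished endpoint of $Q_{ij}$ is pinned to the intersection of the supporting lines of $L_i$ and $L_j$. Thus the pattern of edges $(L_\bullet, Q_{ij})$ alone already determines \emph{which} pairs of lines meet.

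The main technical step is to encode, for every line $\ell_i$, the \emph{order} in which its crossings appear along it, because edges of a generalized transmission graph record only incidences of the form ``$p(x_j)\in x_i$'' and not positions. I would handle this with \emph{prefix vertices}: for each $i$ and each $k=1,\dots,n-1$, introduce a vertex $L_i^k$ to be realized as a segment collinear with $L_i$, whose distinguished endpoint $p(L_i^k)$ is the $k$-th crossing along $\ell_i$, and whose other endpoint lies past all later crossings. The edge $(L_j, L_i^k)$, where $\ell_j$ is the pseudo-line crossing $\ell_i$ at position $k$, forces $p(L_i^k)$ to the correct intersection. Collinearity of $L_i^k$ with $L_i$ is forced by adding edges $(L_i^k, Q_{ij'})$ for every $j'$ whose crossing with $\ell_i$ occurs at position $\ge k$; these make $L_i^k$ contain several already-pinned collinear points on the supporting line of $L_i$, forcing the segment onto that line. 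Reading off the pattern of in-edges at each $L_i^k$ then recovers the order prescribed by $O^i$.

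The forward direction is routine: given a straight-line stretching of $\mathcal{D}$, realize each $L_i$, each $L_i^k$, and each $Q_{ij}$ exactly as described and verify that the induced transmission graph equals $G_\mathcal{D}$. The main obstacle is the backward direction: one must argue that \emph{any} realization of $G_\mathcal{D}$ as a generalized transmission graph of line segments looks like a stretching of $\mathcal{D}$. This requires careful non-edge analysis to rule out pathological configurations such as parallel line-segments, coincident crossings, spurious collinearities, or distinguished endpoints lying accidentally on unrelated segments. To this end I expect to introduce additional padding and separator gadgets that enforce genericity; once they are in place, taking the supporting lines of the $L_i$'s yields the desired straight-line arrangement and completes the reduction.
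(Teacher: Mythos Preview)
Your overall strategy matches the paper's: reduce from \STRETCH, introduce one ``line'' vertex per pseudo-line, one ``crossing'' vertex per pair (with edges from both line vertices pinning its distinguished point to the intersection), and auxiliary vertices along each line to encode the crossing order. Where you diverge is in how you force the auxiliary segments to be collinear with their line segment, and this is exactly where your proposal has a gap.

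You force $L_i^k$ onto the supporting line of $L_i$ by making it contain several already-pinned points $p(Q_{ij'})$ on that line. This breaks for $k=n-1$: only one crossing remains, so $L_i^{n-1}$ is constrained to contain a single point and can point in any direction. More importantly, you yourself flag the backward direction as unfinished and defer it to unspecified ``padding and separator gadgets''. That is the real gap.

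The paper closes this gap with a single clean device you are missing: the \emph{mutual couple}. For each order-enforcing vertex $b^i_k$ the paper adds \emph{both} edges $(c_i,b^i_k)$ and $(b^i_k,c_i)$. Then $p(b^i_k)\in c_i$ and $p(c_i)\in b^i_k$, so the supporting lines of $c_i$ and $b^i_k$ each pass through both distinguished points; hence the two segments are collinear whenever $p(c_i)\neq p(b^i_k)$. With all the $b^i_k$ forced onto the line of $c_i$ and all containing $p(c_i)$, the inclusion pattern $(b^i_{o^i_k},a_{\{i,o^i_l\}})$ for $l<k$ together with the non-edges for $l\ge k$ immediately orders the intersection points $p(a_{\{i,o^i_l\}})$ along that line, with no further gadgets needed. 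No genericity padding is required: one does not have to \emph{prevent} spurious incidences in a realization, only to argue that the edges and non-edges actually present in $G_L$ already determine the intersection order, and the mutual-couple trick makes that argument a few lines long.

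So your sketch is on the right track, but as written the backward direction is not established; replacing your ``contains several collinear pinned points'' mechanism by the two-way edges $(L_i,L_i^k)$ and $(L_i^k,L_i)$ is precisely the missing idea that turns the sketch into a proof.
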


\begin{proof}
The proof proceeds by a reduction from \STRETCH.
Given an alleged description $\mathcal{D}$ 
of a simple arrangement of lines, we construct a graph 
$G_L = (V_L, E_L)$ such that $\mathcal{D}$ 
is realizable as a line arrangement if and only if $G_L$ 
is the generalized transmission graph of an arrangement 
of line segments. We set $V_L = A \cup B \cup C$ with 
\begin{alignat*}{2}
A &= \{a_{\{i, k\}} && \mid 1 \leq i \neq k \leq n \}, \\
B &= \{b^i_{k} && \mid 1 \leq i \leq n, 1 \leq k \leq n-1\},\\
C &= \{c_i && \mid 1 \leq i \leq n\},
\end{alignat*}
where the $c_i$ are numbered in order given by 
$\mathcal{D}$.  The $\{\ \}$ in the indices 
of the $a_{\{i,k\}}$ indicates that $a_{\{i, k\}} = a_{\{k,i\}}$.

Before defining the edges, we describe the 
intuitive meaning of the different vertices. 
The line segments associated with $C$  correspond 
to the lines $\ell_i$ of the arrangement. The 
endpoints of the line segment associated with 
$a_{\{i,k\}}$ will enforce that there is an intersection 
of the line segments for $c_i$ and $c_k$, for 
$1 \leq i \neq k \leq n$. The endpoints of the 
line segments for the $b_{k}^i$, $k = 1, \dots, n - 1$, 
will be placed between the $a_{\{i, k\}}$ on $c_i$ and 
thus enforce the order of the intersection. When it 
is clear from the context, we will not explicitly 
distinguish between a vertex of the graph and the 
associated line segment. Now we define the edges: 
\begin{alignat*}{2}
E_L &= \phantom{\cup} \{(c_i, a_{\{i,k\}}), (c_i, b^i_k), (b^i_k,c_i)
&&\mid 1 \leq i \neq k \leq n\}\\
&\phantom{=} \cup \{(b^i_{o^i_k}, b^i_{o^i_{l}}), 
(b^i_{o^i_k}, a_{\{i,o^i_l\}})&&\mid 
1 \leq i \leq n, 1 \leq l < k\leq n-1 \}
\end{alignat*}
Given $\mathcal{D}$, the sets $V_L$ and $E_L$ 
can be constructed in polynomial time. 
It remains to show correctness. 
Suppose first that
$\mathcal{D}$ is realizable, and 
let $\mathcal{L} = (\ell_1, \dots, \ell_n)$ 
be a simple line arrangement with
$\mathcal{D} = \mathcal{D}(\mathcal{L})$. We show that 
there exists an arrangement $\mathcal{C}$ of line 
segments that realizes $G_L$.
Let $D$ be a disk that contains all vertices of 
$\mathcal{L}$, with $\partial D$ having a positive 
distance from each vertex.

The circular order of the intersections between 
$\ell_1, \dots, \ell_n$ and $\partial D$ is \allowbreak 
$\ell_1, \dots, \ell_n, \allowbreak \ell_1, \dots, \ell_n$. 
There is no vertical line in $\mathcal{L}$, so we can add 
a virtual vertical line $\ell'$ that divides the intersection 
points along $\partial D$ into a ``left'' set 
$D_l = \{q_1^l, q_2^l, \dots, q_n^l\}$ and a ``right'' set
$D_r =\{ q_1^r, q_2^r, \dots, q_n^r\}$ such that each 
set contains exactly one intersection with each line
$\ell_i$, $i = 1, \dots, n$.

For $i = 1, \dots, n$, we set $c_i$ to
$\ell_i \cap D$, with $p(c_i) = q_i^l$.
The $a_{\{i, k\}}$ are constructed such that 
$p(a_{\{i, k\}})$ is the intersection point of $\ell_i$ 
and $\ell_k$. The direction and length are chosen 
in such a way that $a_{\{i, k\}}$ intersects no other lines.
Now we place the line segments $b_{o_k^i}^i$. 
They are positioned such that $p(b_{o_k^i}^i)$ 
lies between $p(a_{\{i, o_k^i\}})$ and $p(a_{\{i, o_{k+1}^i\}})$, 
for $k = 1, \dots, n-2$. Furthermore, we place $p(b_{o_{n-1}})$ 
to the right of $a_{\{i, o_{n-1}^i\}}$. The line 
segments lie on the lines $\ell_i$ such that $p(c_i)$ lies in the
relative interior of $b_{k}^i$. For an example of this 
construction, see \Cref{fig:segmentconstr}.
 \begin{figure}

\subfloat[Complete line segment construction for three lines]{
	\includegraphics[width=0.45\textwidth]{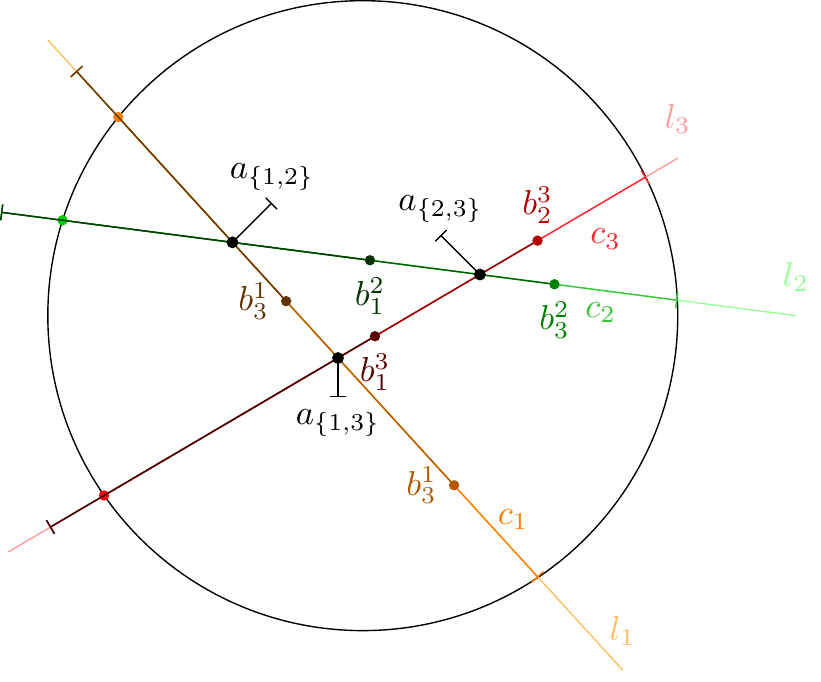}
}\qquad
\subfloat[Closeup of $c_2$. The line segments $b_1^2$ and $b_3^2$ 
are shifted upwards to show their positioning.]{

\includegraphics[width=0.45\textwidth]{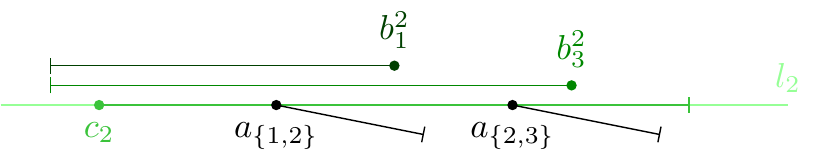}
}
\caption{Construction of the line segments.}
\label{fig:segmentconstr}
\end{figure}
It follows from the construction that the generalized 
transmission graph of $\mathcal{C}$ is indeed $G_L$. 

Now consider an arrangement $\mathcal{C}$ of line 
segments realizing $G_L$. 
Let $\mathcal{L}' =(\ell_1', \dots ,\ell_n')$ 
be the arrangement of lines where $\ell_i'$ is
the supporting line of $c_i$, for $i = 1, \dots, n$.
We claim that $\mathcal{D} = \mathcal{D}(\mathcal{L}')$.

We first consider the role of the line segments $a_{\{i, k\}}$. 
Since $p(a_{\{i, k\}})$ lies on $c_i$ and $c_k$, we have 
$p(a_{\{i, k\}}) = c_i \cap c_k$, and therefore $\ell'_i$ 
and $\ell_k'$ intersect in $p(a_{\{i, k\}})$. This 
ensures that all pairs of lines have an intersection point 
that is also the endpoint of an $a_{\{i,k\}}$.
Next, we have to show that the order of the intersections 
along each line $\ell'_i$, for $i = 1, \dots, n$, is in the 
order as given by $\mathcal{D}$. This is guaranteed by the 
line segments $b^i_k$ as follows:
By the definition of $E_L$, namely by the edges 
$(c_i, b_k^i)$ and $(b_k^i, c_i)$, it is ensured that 
all $p(b_k^i)$ lie on the same line as $c_i$. The definition 
also enforces the order of the $p(a_{\{i, k\}})$ and $p(b_k^i)$ 
along the line. Since $p(a_{\{i, o_k\}})$ lies on $b_{o_{k + 1}}^i$
but not on $b_{o_k}^i$ and since all lie on the same line $c_i$, 
it has to lie between the corresponding endpoints. 
This enforces the correct order of the intersections.
\end{proof}

\section{Circular sectors}\label{sec:circularsectors}

We now consider the problem of recognizing generalized 
transmission graphs of circular sectors. The reduction extends 
the proof for \Cref{thm:1darc}, but we need to be more
careful in order to enforce the correct order of intersection. 

We will only consider circular sectors with opening
angle $\alpha \leq \pi/4$.
If $x$ and $y$ are circular sectors with $p(x) \in y$ and 
$p(y) \in x$, we call $x$ and $y$ a \emph{mutual couple} 
of circular sectors. We write $\gamma(u(x), u(y))$ for 
the counter-clockwise angle between the vectors $u(x)$ and 
$u(y)$.

\begin{observation}\label{ob:pairsangles}
Let $x$ and $y$ be a mutual couple of circular sectors, then
\[
  |\pi - \gamma(u(x), u(y))| \leq (\alpha(x)+ \alpha(y))/2.
\]
\end{observation}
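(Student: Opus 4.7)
The plan is to reduce the claim to a standard triangle-inequality style argument for angles between unit vectors. Assume $p(x)\neq p(y)$ (otherwise the statement is about degenerate input) and let $v=p(y)-p(x)$. The definition of a circular sector says that a point lies in a sector with apex $p$, direction $u$, opening angle $\alpha$ and radius $r$ iff it is within distance $r$ of $p$ and the vector from $p$ to it makes unsigned angle at most $\alpha/2$ with $u$. Applying this to the two conditions defining a mutual couple, from $p(y)\in x$ I get that the unsigned angle between $v$ and $u(x)$ is at most $\alpha(x)/2$, and from $p(x)\in y$ (using that the vector from $p(y)$ to $p(x)$ is $-v$) I get that the unsigned angle between $-v$ and $u(y)$ is at most $\alpha(y)/2$. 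The latter is equivalent to saying that the unsigned angle between $v$ and $-u(y)$ is at most $\alpha(y)/2$.

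Now I would invoke the triangle inequality on the unit circle for unsigned angles: for any three unit vectors $a,b,c$, the unsigned angle between $a$ and $c$ is at most the sum of the unsigned angles between $a,b$ and between $b,c$. Applied with $a=u(x)$, $b=v/\|v\|$, $c=-u(y)$, this yields that the unsigned angle between $u(x)$ and $-u(y)$ is at most $\alpha(x)/2+\alpha(y)/2$.

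Finally I would translate this back into the notation of the statement. Since $\gamma(u(x),u(y))\in[0,2\pi)$ is the counter-clockwise angle from $u(x)$ to $u(y)$, the counter-clockwise angle from $u(x)$ to $-u(y)$ is $\gamma(u(x),u(y))-\pi$ modulo $2\pi$, and the unsigned angle between $u(x)$ and $-u(y)$ is exactly $|\pi-\gamma(u(x),u(y))|$ (since $\gamma-\pi\in[-\pi,\pi)$, so its absolute value is at most $\pi$ and already equals the shorter arc length). Combining with the previous paragraph gives the claimed inequality.

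The argument is almost mechanical; the only point that requires a little care is the last translation between the counter-clockwise angle $\gamma$ and the unsigned angle between $u(x)$ and $-u(y)$, because one has to verify that the value $|\pi-\gamma|$, which lives in $[0,\pi]$, really coincides with the unsigned angular distance and not with its $2\pi$-complement. Once this is checked, no further calculation is needed.
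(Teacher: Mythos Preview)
Your argument is correct and is essentially a formalization of the paper's proof, which consists only of a reference to a figure showing the extreme configuration of $x$ and $y$. Your use of the vector $v=p(y)-p(x)$ and the triangle inequality for angles on the unit circle makes explicit precisely the geometric content of that picture, and your final verification that $|\pi-\gamma(u(x),u(y))|$ coincides with the unsigned angle between $u(x)$ and $-u(y)$ is the one step the figure leaves implicit.
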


The argument is visualized in \Cref{fig:coupleangle}.

 \begin{figure}
\subfloat[Extreme position of $x$ and $y$; 
 the symmetric case is indicated by the red line.]{
 \includegraphics[width=0.45\textwidth]{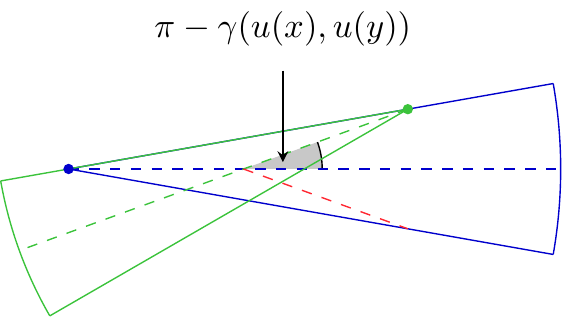}
 \label{fig:coupleangle}
}
\qquad
  \subfloat[$a_k$ and $l_i$ form a mutual couple, so $u(a_k)$ 
 lies in the blue range.  The apex of $a_{k-1}$ is projected to 
 the right of $p(a_k)$, forcing $u(a_k)$ to be in the red range.]{
 \includegraphics[width=0.45\textwidth]{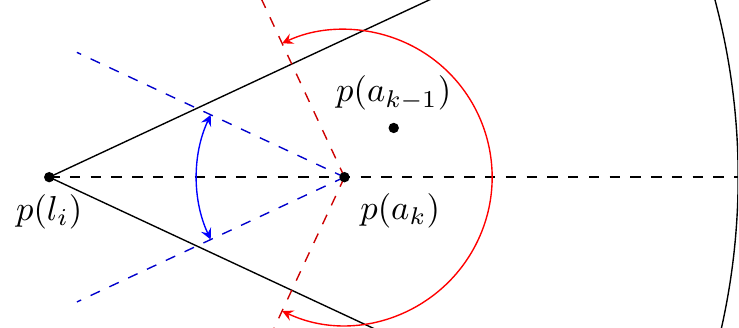}
 \label{fig:ordergadget}
}
 \end{figure}

\begin{observation}\label{ob:outerangle}
Let $x$ and $y$ be circular sectors whose bisectors intersect 
at an acute angle of $\beta > \max\{\alpha(x), \alpha(y)\}/2$.
Then, the acute angle between the outer line segments 
of $x$ and the bisector of $y$ is at least 
$\beta - \max\left\{\alpha(x), \alpha(y)\right\}/2$.
\end{observation}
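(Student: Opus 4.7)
The plan is a short exercise in angle chasing, using only the definition of a circular sector together with the bound $\alpha(x)\le\pi/4$ enforced in the preliminaries. First I would fix a reference frame in which the bisector direction $u(x)$ has direction angle $0$; by definition, the two outer line segments of $x$ then have direction angles $+\alpha(x)/2$ and $-\alpha(x)/2$. Since $\beta$ is the acute angle between $u(x)$ and $u(y)$, I may take $u(y)$ to have direction angle $\beta\in(0,\pi/2]$ after possibly reversing its orientation, which does not affect the line it determines.

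Under this setup, the signed angles between $u(y)$ and the two outer line segments of $x$ are exactly $\beta-\alpha(x)/2$ and $\beta+\alpha(x)/2$. The hypothesis $\beta>\max\{\alpha(x),\alpha(y)\}/2\ge\alpha(x)/2$ makes the first one strictly positive and at most $\pi/2$, and it equals the claimed lower bound $\beta-\alpha(x)/2\ge\beta-\max\{\alpha(x),\alpha(y)\}/2$. For the second angle I would use $\beta\le\pi/2$ and $\alpha(x)/2\le\pi/8$ to conclude that $\beta+\alpha(x)/2\in(0,5\pi/8)$, so the associated acute angle is $\min\bigl(\beta+\alpha(x)/2,\ \pi-\beta-\alpha(x)/2\bigr)$. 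A one-line check using $\beta\le\pi/2$ gives $\pi-\beta-\alpha(x)/2\ge\beta-\alpha(x)/2$, so this outer segment also makes an acute angle with $u(y)$ of at least $\beta-\alpha(x)/2$.

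The only subtlety is the line-versus-ray distinction: outer line segments and bisectors determine unoriented lines, so every signed angle computed above must be folded into $[0,\pi/2]$ via the map $\theta\mapsto\min(\theta\bmod\pi,\ \pi-\theta\bmod\pi)$. I expect this bookkeeping, rather than any deeper geometric idea, to be the main thing to get right, and the bound $\alpha(x)\le\pi/4$ from the preliminaries is precisely what prevents the folded angle from dropping below $\beta-\max\{\alpha(x),\alpha(y)\}/2$.
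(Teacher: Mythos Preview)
Your argument is correct and is the natural one. The paper does not actually supply a proof of this observation; it is stated as a bare fact and left to the reader, so there is nothing to compare against beyond noting that your angle-chasing is exactly the intended elementary verification. One small remark: the bound you obtain is in fact the sharper $\beta-\alpha(x)/2$, and the $\max\{\alpha(x),\alpha(y)\}$ in the statement is just a convenient weakening (presumably chosen so the bound is symmetric in $x$ and $y$ for later use).
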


\begin{lemma}\label[lemma]{la:orderingGadget}
Let $l$ be a circular sector and let $a_1,\dots, a_n$ 
be circular sectors with
\begin{alignat*}{2}
p(a_i) &\in l, &~&1\leq i \leq n,\\
p(a_i) &\in a_j,&~&1\leq i <j \leq n, \text{ and}\\
p(l) &\in a_j, &~&1\leq j \leq n.
\end{alignat*}
Then, the projection of the $p(a_i)$ onto the directed 
line $\ell$ defined by $u(l)$ has the order 
\[
 O=o_1, \dots, o_{n}  = a_1, \dots, a_n.
\]
\end{lemma}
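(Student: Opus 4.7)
The plan is to use Observation~\ref{ob:pairsangles} to pin each bisector direction $u(a_j)$ close to $-u(l)$, and then combine this with the containments $p(a_i)\in a_j$ for $i<j$ to deduce that $p(a_i)$ projects strictly before $p(a_j)$ along $\ell$. Transitivity on consecutive pairs then yields the claimed ordering.

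First, I would observe that for every $j$ the hypotheses $p(a_j)\in l$ and $p(l)\in a_j$ make $l$ and $a_j$ a mutual couple. Observation~\ref{ob:pairsangles} therefore gives
\[
|\pi-\gamma(u(l),u(a_j))|\le\tfrac12(\alpha(l)+\alpha(a_j))\le\pi/4,
\]
using the standing assumption that all opening angles are at most $\pi/4$. Equivalently, $u(a_j)$ differs from $-u(l)$ by at most $\pi/4$, which in particular places each entire sector $a_j$ on the ``backward'' side of $p(a_j)$ with respect to $u(l)$, as illustrated in Figure~\ref{fig:ordergadget}.

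Next, I would fix a pair $i<j$ and use $p(a_i)\in a_j$ to write $p(a_i)=p(a_j)+tv$, with $t\ge 0$ and $v$ a unit vector within angle $\alpha(a_j)/2$ of $u(a_j)$. A second triangle inequality on angles shows that $v$ differs from $-u(l)$ by at most $\tfrac12(\alpha(l)+\alpha(a_j))+\tfrac12\alpha(a_j)\le 3\pi/8<\pi/2$, so $v\cdot u(l)<0$. Consequently $(p(a_i)-p(a_j))\cdot u(l)=t\,(v\cdot u(l))\le 0$, with strict inequality as soon as $t>0$. Applying this to consecutive indices $(i,j)=(k-1,k)$ for $k=2,\dots,n$ chains the comparisons into the claimed order $a_1,\dots,a_n$ of the projections onto $\ell$.

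The only real obstacle is the angular bookkeeping: two triangle inequalities for angles must combine to strictly less than $\pi/2$ in order to guarantee that the sector $a_j$ projects into the half-line behind $p(a_j)$. This is precisely where the standing hypothesis $\alpha\le\pi/4$ enters in an essential way; relaxing it would break the argument. A minor side issue is the degenerate situation $p(a_i)=p(a_j)$, which only yields a weak inequality, but this does not occur in the realizations used later since the apices will be distinct.
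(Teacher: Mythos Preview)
Your argument is correct and rests on exactly the same geometric idea as the paper: use Observation~\ref{ob:pairsangles} to pin $u(a_j)$ within $\pi/4$ of $-u(l)$, then add the half-angle $\alpha(a_j)/2\le\pi/8$ to conclude that every point of $a_j$ projects behind $p(a_j)$ along $\ell$, the key numerical fact being $3\pi/8<\pi/2$. The only difference is packaging: the paper argues by contradiction (take the first index where the projected order goes wrong and derive an impossible angle), whereas you give the direct pairwise inequality $(p(a_i)-p(a_j))\cdot u(l)\le 0$ for all $i<j$. Your direct version is slightly cleaner and in fact makes the final ``chaining over consecutive pairs'' remark unnecessary, since you have already established the comparison for every pair.
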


\begin{proof}
Each $a_i$ forms a mutual couple with $l$. Thus, with 
\Cref{ob:pairsangles}, we get
\begin{align}
|\pi - \gamma\left( u(a_i), u(l) \right)| &\leq \pi/4\label{eq:ajsmall}.
\end{align}
Assume that the order of the projection differs 
from $O$. Let $O' = o'_1,\dots, o'_n$ be the actual order 
of the projection of the $p(a_i)$ onto $\ell$. Let $j$ 
be the first index with $o_j' \neq o_j$ and $o'_j = a_k$. Then, 
there is an $o'_i$, $i >j$, with $o'_i = a_{k-1}$. By 
definition, $p(a_{k-1})$ has to be included in $a_{k}$, 
while still being projected on $\ell$ to the right of 
$p_{k}$. This is only possible if 
\[
|\pi -\gamma(u(a_j), \ell)| > \frac{\pi}{2} -\frac{\alpha(a_k)}{2} 
\geq \frac{\pi}{2} - \frac{\pi}{8} =\frac{3\pi}{8} > \frac{\pi}{4}
\]
This is a contradiction to (\ref{eq:ajsmall}), and consequently 
the order of the projection is as claimed.
The possible ranges of the angles are illustrated in 
\Cref{fig:ordergadget}.
\end{proof}

An arrangement $\mathcal{C}$ of circular sectors is called 
\emph{equiangular} if $\alpha(c) = \alpha(c')$ for all 
circular sectors $c, c' \in \mathcal{C}$.

Let $c, c'$ be two circular sectors of $\mathcal{C}$, 
and assume that $d \in \mathcal{C}$ is a circular sector
with $p(d) \in c$ and $p(d) \in c'$, such that $c$ and 
$c'$ do not form both a mutual couple with the same circular sector. 
Moreover let $\beta_\text{min}$ be the smallest acute angle 
between the bisector of any pair $c, c'$ with this property.
We will call the arrangement \emph{wide spread} if 
\[
\beta_\text{min} \geq 2\cdot \max_{c \in \mathcal{C}}(\alpha(c))
\]
The possible situations are depicted in \Cref{fig:widespread}.
\begin{figure}
\subfloat[Constraint on the angle.]{
 \includegraphics[width=0.45\textwidth]{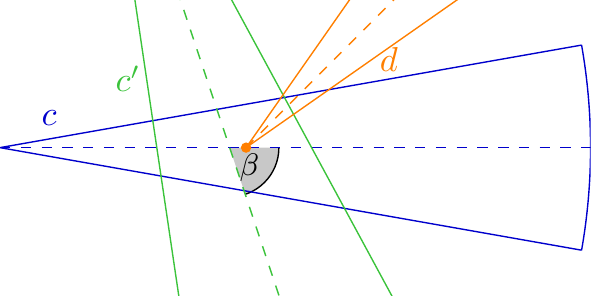}
}\qquad
  \subfloat[No constraint on the angle.]{
 \includegraphics[width=0.45\textwidth]{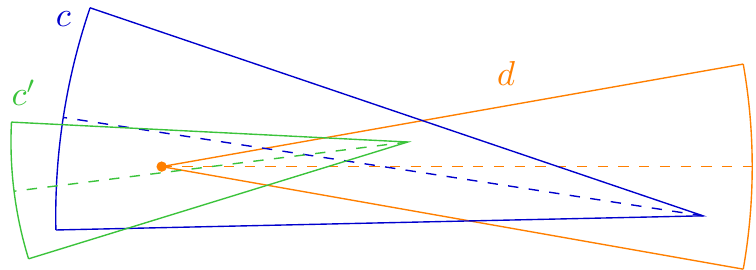}

}
\caption{The wide spread condition.}
\label{fig:widespread}
 \end{figure}

\begin{definition}
   The recognition problem of the generalized transmission graphs 
   of equiangular, wide spread circular sectors  is called \SECTOR.
\end{definition}

Now we want to show that \SECTOR is hard for \ER. This is 
done in three steps. First, we give a polynomial-time 
construction that creates an arrangement of circular sectors 
from an alleged combinatorial description of a line arrangement. 
Then we show that this construction is indeed a reduction 
and therefore show the \ER-hardness of \SECTOR.

\begin{construction}\label{co:cs}
Given a description $\mathcal{D}$ where all $o_i$ are singletons, 
we construct a graph $G_L = (V_L, E_L)$.
For this construction, let $1 \leq i, k, l\leq n$,
$1 \leq m, m', m'' \leq 3$. The set of vertices is defined as follows:
\begin{alignat*}{2}
V_L &=\phantom{\cup}\{c_{im}\} \cup \{a^{im}_{km'}&\mid i\neq k\} 
\cup \{b^{im}_{km'}&\mid  i\neq k\}
\end{alignat*}
As for the line segments, we do not distinguish 
between the vertices and the circular sectors. 
For the vertices $a_{km'}^{im}$ and $b_{km'}^{im}$, 
the upper index indicates the $c_{im}$ with whom 
$a_{km'}^{im}$ and $b_{km'}^{im}$ form a mutual couple. 
The lower index hints at a relation to $c_{km'}$.
In most cases, the upper index is $im$ and the lower index 
differs. For better readability, the indices are 
marked bold ($a_{\mathbf{im}}^{\mathbf{km'}}$), 
if $im$ is the lower index.

The bisectors of the circular sectors $c_{i2}$ 
will later define the lines of the arrangement. 
The circular sectors $a^{im}_{km'}$ and 
$a_{\mathbf{im}}^{\mathbf{km'}}$ have a similar role as 
the $a_{\{i,k\}}$ in the construction for the line segments. 
They enforce the intersection of $c_{im}$ and $c_{km'}$. 
Similar to the $b_{k}^{i}$, the $b_{km'}^{im}$ 
help enforcing the intersection order.

We describe $E_L$ on a high level. For a detailed technical 
description, refer to \Cref{ap:edgedef}.
We divide the edges of the graph into \emph{categories}. The 
first category, $E_I$, contains the edges that enforce an 
intersection between two circular sectors $c_{im}$ and 
$c_{km'}$, for $k <l$. The edges of the next category $E_C$ 
enforce that each $a_{km'}^{im}$ and each $b_{km'}^{im}$ 
forms a mutual couple with $c_{im}$.
\begin{align*}
\begin{alignedat}{2}
E_I=\phantom{\cup}&\{(c_{im}, a_{km'}^{im})&&\mid i\neq k\}\\
	\phantom{=}\cup&\{(c_{im}, a^{km'}_{im}) &&\mid i\neq k\}
	\end{alignedat}
	&&
	\begin{alignedat}{2}
	E_C=\phantom{\cup}&\{(a_{km'}^{im}, c_{im})&&\mid i\neq k\}\\
	\phantom{=}\cup&\{(c_{im}, b_{km'}^{im})&&\mid i \neq 
	k\}\\
	\phantom{=}\cup&\{(b_{km'}^{im}, c_{im})&&\mid i \neq k\}
	\end{alignedat}
\end{align*}
The edges in the next categories enforce the local order. 
The first category, called $E_\text{GO}$, enforces a global 
order in the sense that the apexes of all 
$a_{{o_j}m'}^{im}$ and $b_{{o_j}m'}^{im}$ 
will be projected to the left of any $a_{{o_k}m'}^{im}$ and 
$b_{{o_k}m'}^{im}$   with $k >j$. Additionally, all 
$a_{\mathbf{im}}^{\mathbf{o_jm'}}$ will be included in 
$a_{{o_k}m'}^{im}$ and $b_{{o_k}m'}^{im}$. The projection order 
is enforced by the construction described in \Cref{la:orderingGadget}, 
the inclusion is enforced by adding the appropriate edges. 

It remains to consider the local order of the six 
circular sectors ($a_{j1}^{im}, \dots, a_{j3}^{im}, \allowbreak 
b_{j1}^{im}, \dots, b_{j3}^{im}$) that are associated 
with $c_{im}$ for each intersecting circular sector 
$c_{j2}$. The projection order of these is either ``$1$, $2$, $3$'' 
or ``$3$, $2$, $1$'', depending on the order of 
$l_i$ and $l_j$ on the vertical line. If $l_j$ is below $l_i$, 
the order on $c_{im}$ is ``$1$, $2$, $3$''; in the other case, 
it is ``$3$, $2$, $1$''.  This is again enforced by adding the 
edges as defined in \Cref{la:orderingGadget}.
For a possible realization of this graph, 
see \Cref{fig:csconstruction,fig:abplacement}.
This construction can be carried out in polynomial time.
\end{construction}

Now we show that \Cref{co:cs} gives us indeed a reduction:

\begin{figure}
\center
\includegraphics[width=0.5\textwidth]{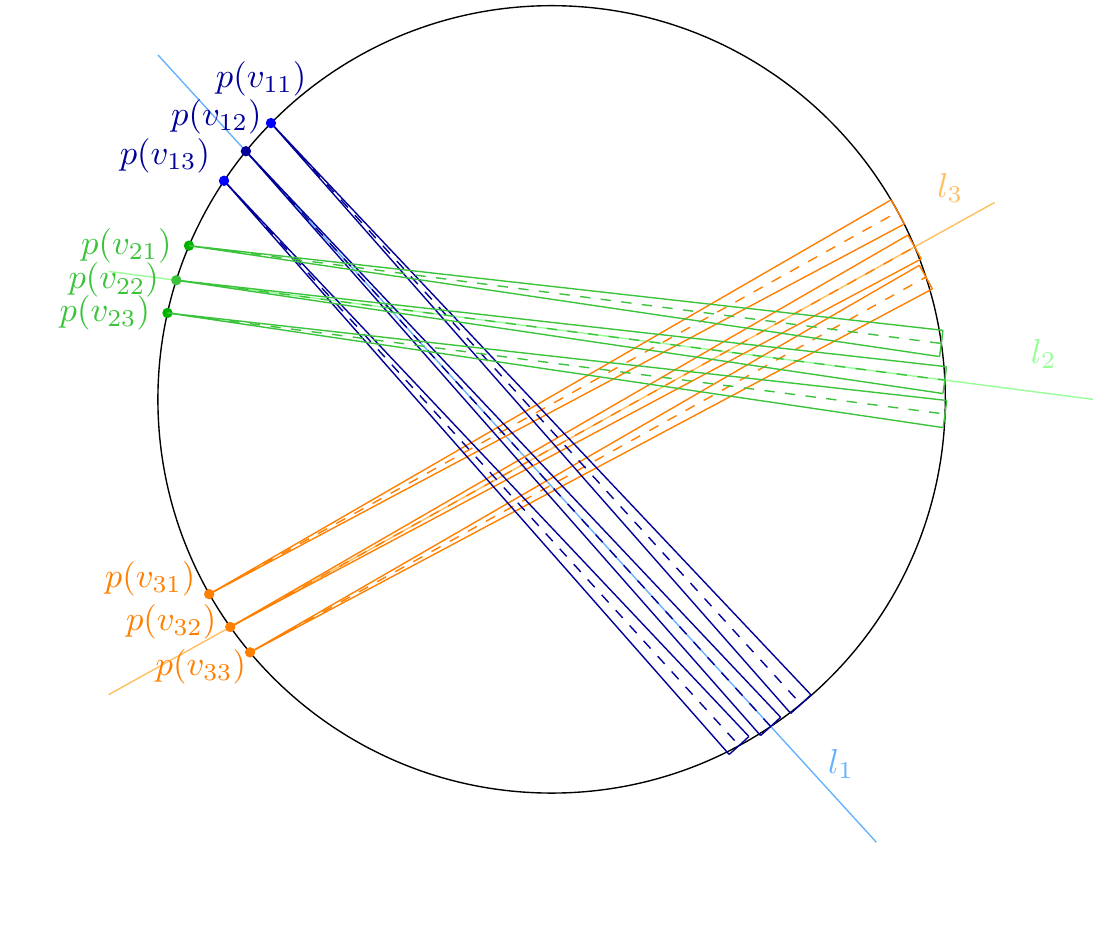}
\caption{Construction of the circular sectors $c_{im}$ 
based on a given line arrangement}
\label{fig:csconstruction}
\end{figure}

\begin{lemma}\label[lemma]{la:cs=>}
Suppose there is a line arrangement 
$\mathcal{L} = \{\ell_1, \dots, \ell_n\}$ 
realizing $\mathcal{D}$, then there is an equiangular, 
wide spread arrangement $\mathcal{C}$ of circular sectors 
realizing $G_L$ as defined in \Cref{co:cs}.
\end{lemma}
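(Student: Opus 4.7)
The plan is to mimic the line-segment construction from the proof of \Cref{thm:1darc}, but with three ``slots'' $c_{i1}, c_{i2}, c_{i3}$ per line and with \Cref{la:orderingGadget} taking the place of the elementary projection argument used there. First I would fix the common opening angle. Let $\beta^*$ be the minimum acute angle between any two lines of $\mathcal{L}$ and set $\alpha := \min\{\pi/4,\, \beta^*/4\}$. This guarantees the standing restriction $\alpha \leq \pi/4$, the wide-spread inequality (the bisectors $u(c_{im})$ and $u(c_{km'})$ will be parallel to $\ell_i$ and $\ell_k$, so $\beta_{\min} \geq \beta^* \geq 4\alpha \geq 2\alpha$), and enough slack to invoke \Cref{ob:pairsangles} whenever a mutual couple is formed.

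Next I would realize the $c_{im}$. Enclose $\mathcal{L}$ in a large disk $D$ whose boundary has positive distance from every vertex of $\mathcal{L}$, pick for each line $\ell_i$ three distinct apex positions on $\ell_i \cap D$ (for $c_{i1}, c_{i2}, c_{i3}$), set $u(c_{im})$ equal to the positive direction of $\ell_i$, and give each $c_{im}$ radius large enough that it contains all of $\ell_i \cap D$. For every vertex $q_{ik}=\ell_i\cap \ell_k$ I would then place, slightly offset from $q_{ik}$ along $\ell_i$, the sectors $a_{km'}^{im}$ for $m'\in\{1,2,3\}$, each with bisector direction roughly antiparallel to $u(c_{im})$ so that \Cref{ob:pairsangles} yields the required mutual couple with $c_{im}$; the symmetric placement (apex offset along $\ell_k$, bisector antiparallel to $u(c_{km'})$) gives the $a_{\mathbf{im}}^{\mathbf{km'}}$. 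The $b_{km'}^{im}$-sectors are placed between consecutive vertices along $\ell_i$, in the order prescribed by $\mathcal{D}$, again pointing antiparallel to $u(c_{im})$ and positioned exactly as in the ordering gadget of \Cref{la:orderingGadget}. Finally, all radii of the $a$- and $b$-sectors are shrunk until each such sector lies inside a tiny tube around its intended piece of $\ell_i$ or $\ell_k$.

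Verification splits into three parts. The edges of $E_I \cup E_C$ follow directly from the apex placement and the antiparallel choice of bisectors. The edges enforcing the local and global intersection order are realized because the apex positions along each line satisfy the hypotheses of \Cref{la:orderingGadget} by construction, so the projection order is as required by $E_\text{GO}$ and by the local gadgets within each six-tuple $(a_{j1}^{im},\dots,b_{j3}^{im})$. The absence of spurious edges follows from the tube-shrinking step, since the only apexes in any such tube are those intended to lie there. I expect the main obstacle to be the quantitative control in this last part: one must argue that a short, thin $b_{km'}^{im}$ placed on $\ell_i$ does not accidentally couple with a $c_{jm}$ whose line crosses $\ell_i$ nearby, and that its apex is not engulfed by an unrelated $a$- or $b$-sector lying on another line. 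This reduces to choosing radii as a function of $\alpha$ and of the minimum distance between distinct vertices of $\mathcal{L}$, a bound that a large enough bounding disk $D$ makes uniform.
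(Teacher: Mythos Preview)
Your placement of the three copies $c_{i1}, c_{i2}, c_{i3}$ is the genuine gap. You put all three apexes on $\ell_i$ with common direction $u(\ell_i)$ and radii large enough to cover $\ell_i \cap D$. But then, whichever of the three apexes lies furthest back along $u(\ell_i)$ contains the other two apexes in its sector, producing directed edges of the form $(c_{im}, c_{im'})$. No such edge appears anywhere in $E_L = E_I \cup E_C \cup E_\text{GO} \cup E_\text{LOI} \cup E_\text{LOD}$, so the arrangement does \emph{not} realize $G_L$. Worse, because $c_{i1}, c_{i2}, c_{i3}$ now coincide as thin wedges around the same line, every apex $p(a_{km'}^{i1})$ that you place for $c_{i1}$ automatically lies in $c_{i2}$ and $c_{i3}$ as well, creating further spurious edges $(c_{i2}, a_{km'}^{i1})$, etc. The paper avoids exactly this by putting the apexes $p(c_{i1}), p(c_{i3})$ on $\partial D$ at arc-distance $\tau$ from $p(c_{i2})$, so that the three bisector lines $\ell_{i1}, \ell_{i2}, \ell_{i3}$ are \emph{parallel but distinct}; then $\alpha$ is chosen small enough (relative to $\tau$ and the diameter of $D$) that the three sectors, and their associated $a$- and $b$-apexes, stay separated. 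This offset is also what makes the local orders in $E_\text{LOI}$ and $E_\text{LOD}$ come out automatically: the three parallel lines $\ell_{k1}, \ell_{k2}, \ell_{k3}$ meet $\ell_{im}$ in a geometrically determined order that flips precisely when $o_k > i$ versus $o_k < i$. In your collinear version these nine intersections collapse to the single point $\ell_i \cap \ell_k$, and the required local order has no geometric origin.

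A second, smaller issue: your final ``tube-shrinking'' step is incompatible with $E_C$. The edge $(a_{km'}^{im}, c_{im}) \in E_C$ forces $p(c_{im}) \in a_{km'}^{im}$, so $r(a_{km'}^{im})$ must be at least $\dist(p(a_{km'}^{im}), p(c_{im}))$; you cannot shrink these radii. The absence of spurious edges in the paper comes not from short radii but from the fact that (i) all $a_{km'}^{im} \cap D$ and $b_{km'}^{im} \cap D$ sit inside the bounding box $B(c_{im})$, (ii) these boxes are disjoint near $\partial D$ where the $c$-apexes live, and (iii) along $\ell_{im}$ every apex at smaller distance from $p(c_{im})$ is already accounted for by an edge in $E_\text{GO} \cup E_\text{LOI} \cup E_\text{LOD}$.
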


\begin{proof}
We construct the containing disk $D$, and the sets of 
intersection points $D_l$ and $D_r$ as in the proof of 
\Cref{thm:1darc}. By $\ell_{im}$, we denote the 
directed line through the bisector of the circular sector $c_{im}$.
Let $\alpha_\text{min}$ be the smallest acute angle 
between any two lines of $\mathcal{L}$.
The angle $\alpha$ for $\mathcal{C}$ will be set 
depending on $\alpha_\text{min}$ and the placement 
of the constructed circular sectors $c_{im}$.

In the first step, we place the circular sectors $c_{i2}$. 
They are constructed such that their apexes are on $q_i^l$ and 
their bisectors are exactly the line segments $\ell_i \cap D$. 
We place $p(c_{i1})$ in clockwise direction next to $p(c_{i2})$ 
onto the boundary of $D$. The distance between $p(c_{i1})$ 
and $p(c_{i2})$ on $\partial D$ is some small $\tau > 0$. 
The point $p(c_{i3})$ is placed in the same way, but in 
counter-clockwise direction from $p(c_{i2})$. The 
bisectors of all $c_{im}$ are parallel.  The radii for $c_{i1}$ and 
$c_{i3}$ are chosen to be the length of the line segments 
$\ell_{i1} \cap D$ and $\ell_{i3} \cap D$.

The distance $\tau$ must be small enough so that no 
intersection of any two original lines lies 
between $\ell_{i1}$ and $\ell_{i3}$. Let $\beta$ be the 
largest angle such that if the angle of all $c_{im}$ is set to 
$\beta$, there is always at least one point in $c_{im}$ 
between the bounding boxes $B$ of two circular sectors with 
consecutively intersecting bisectors. Since $\mathcal{L}$ 
is a simple line arrangement, this is always possible.
The angle $\alpha$ for the construction is now set to 
$\min\left\{\alpha_\text{min}/2, \beta\right\}$. 
This first part of the construction is illustrated 
in \Cref{fig:csconstruction}.

Now we place the remaining circular sectors. Their placement 
can be seen in \Cref{fig:abplacement}. The points $p(a_{km'}^{im})$ 
all lie on $\ell_{im}$ with a distance of $\delta$ to the left of 
the intersection of $\ell_{im}$ and $\ell_{km'}$. By 
``to the left'', we mean that the point lies closer to 
$p(c_{im})$ on the line $\ell_{im}$ than the intersection point.
The distance $\delta$ is chosen small enough such that 
$p(a_{km'}^{im})$ lies inside of all $a_{\mathbf{im}}^{\mathbf{km'}}$ 
that have a larger distance to $p(c_{\mathbf{km'}})$ 
than $p(a_{km'}^{im})$. 
The direction of the circular sector $a_{km'}^{im}$ is set to 
$-u(c_{im})$, and its radius is set to 
$r(a_{o_km'}^{im}) = \dist(p(a_{km'}^{im}), p(c_{im})) + \eps$, 
for $\eps > 0$. This lets $p(c_{im})$ lie on the bisecting line 
segment of every circular sector $a_{km'}^{im}$.
The directions and radii for the $b_{km'}^{im}$ are chosen in the 
same way as for the $a_{km'}^{im}$. The apexes of $b_{km'}^{im}$ 
are placed such that they lie between the corresponding bounding 
boxes $B(c_{km'})$. 
For $\alpha$ small enough, this is always possible.

\begin{figure}
\center
\includegraphics[width=0.68\textwidth]{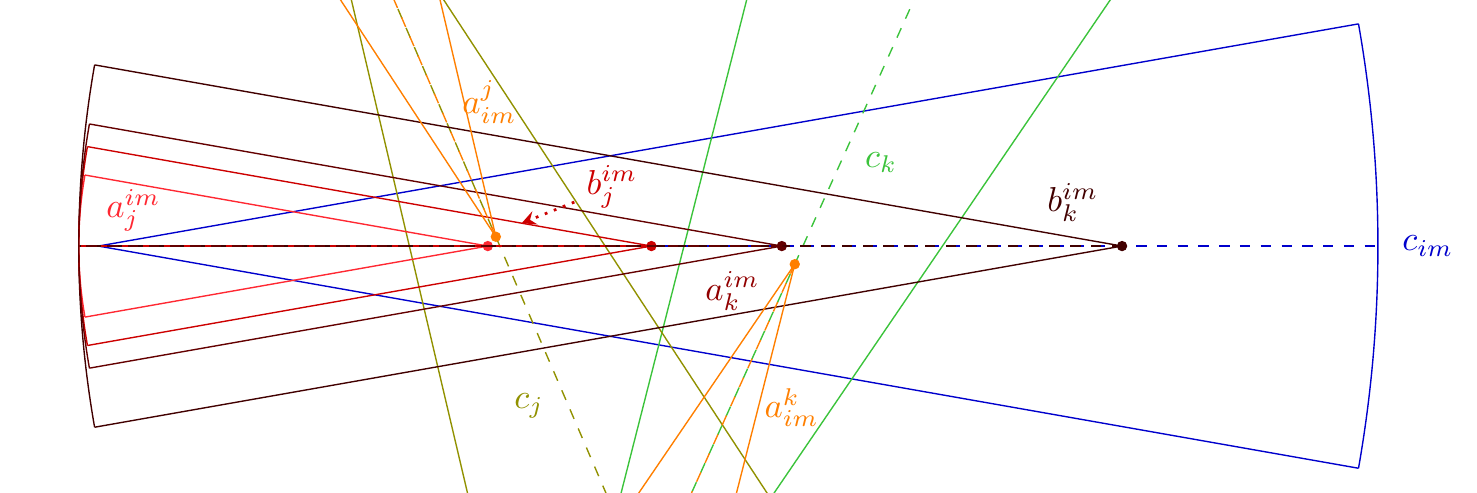}
\caption{Detailed construction inside of one circular sector $c_{im}$.}
\label{fig:abplacement}
\end{figure}

It follows directly from the construction that the generalized 
transmission graph of this arrangement is $G_L$. 
A detailed argument can be found in \Cref{ap:=>}.
\end{proof}

\begin{lemma}\label[lemma]{la:cs<=}
Suppose there is an equiangular, wide spread arrangement $\mathcal{C}$ 
of circular sectors realizing $G_L$ as defined in \cref{co:cs}, then 
there is an arrangement of lines realizing $\mathcal{D}$.
\end{lemma}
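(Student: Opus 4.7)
The plan is to mirror the second half of the proof of \Cref{thm:1darc}. For $i = 1, \dots, n$, I define $\ell'_i$ to be the line supporting the bisector of $c_{i2}$, directed along $u(c_{i2})$, and set $\mathcal{L}' = (\ell'_1, \dots, \ell'_n)$. After a small generic rotation of the plane (if needed, so that no $\ell'_i$ is vertical and no two $\ell'_i$ share the same slope artifact), I claim that $\mathcal{D}(\mathcal{L}') = \mathcal{D}$.

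The first step is to verify that $\ell'_i$ and $\ell'_k$ cross in a unique point for every $i \neq k$. The edges in $E_I \cup E_C$ make $a^{i2}_{k2}$ a mutual couple of $c_{i2}$ whose apex lies in both $c_{i2}$ and $c_{k2}$ (the second inclusion follows from the second defining set of $E_I$ after the swap $i \leftrightarrow k$, $m \leftrightarrow m'$). Hence $c_{i2}$ and $c_{k2}$ share a common point, while their respective mutual-couple partners carry disjoint upper-index labels so that no common mutual-couple partner exists. The wide-spread condition therefore forces the acute angle between the bisectors of $c_{i2}$ and $c_{k2}$ to be at least $2\alpha$; in particular, $\ell'_i$ and $\ell'_k$ are not parallel and meet in a unique point $\pi_{ik}$.

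Next, I argue that the $\pi_{ik}$ appear along $\ell'_i$ in the order $O^i$. Fix $i$. The edges in $E_C$ make every $a^{i2}_{k2}$ and $b^{i2}_{k2}$ a mutual couple with $c_{i2}$, and the edges in $E_\text{GO}$ realize exactly the hypotheses of \Cref{la:orderingGadget} with $l = c_{i2}$ and the ordered family $b^{i2}_{o^i_1 2}, \dots, b^{i2}_{o^i_{n-1} 2}$ playing the role of $a_1, \dots, a_{n-1}$. The lemma then gives that the apexes $p(b^{i2}_{o^i_j 2})$ project onto $\ell'_i$ in the order $O^i$. The containment edges of $E_\text{GO}$ insert each $p(a^{i2}_{k2})$ between the appropriate consecutive $b$-apexes, so the apexes $p(a^{i2}_{o^i_j 2})$ project onto $\ell'_i$ in the order $O^i$ as well. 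To transfer this projection order to the actual order of the $\pi_{ik}$ along $\ell'_i$, I use that $p(a^{i2}_{k2}) \in c_{i2} \cap c_{k2}$, so it lies off the line $\ell'_i$ by at most $\dist(p(a^{i2}_{k2}), p(c_{i2})) \sin(\alpha/2)$ and off $\ell'_k$ by a similar amount; combined with the wide-spread bound $\beta \geq 2\alpha$ and \Cref{ob:outerangle}, the displacement of $\pi_{ik}$ from the $\ell'_i$-projection of $p(a^{i2}_{k2})$ is strictly smaller than the gaps between consecutive apex projections, which preserves the order. The global top-to-bottom slope ordering is read off by applying the same argument to a leftmost vertical line, exploiting the three copies $c_{k1}, c_{k2}, c_{k3}$ and the enforced triple order ``$1,2,3$'' or ``$3,2,1$'' of their associated $a$- and $b$-sectors to fix the orientation of each crossing.

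The main obstacle is the quantitative transfer from projection order on $\ell'_i$ to intersection order of the $\pi_{ik}$ in the last step: a priori the two-dimensional slack carried by each circular-sector apex could flip two nearby crossings, and the argument must show that the slack permitted by equiangularity and wide-spreadness is strictly smaller than the gaps between consecutive apex projections. Pressing this bookkeeping through is precisely where the equiangular hypothesis, the wide-spread hypothesis, the triple $c_{k1}, c_{k2}, c_{k3}$ for each index, and the auxiliary $b$-sectors are each indispensable; the computation is in the same spirit as the angle estimate in the proof of \Cref{la:orderingGadget}, but applied simultaneously to all pairs $(i,k)$.
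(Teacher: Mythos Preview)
Your setup---taking $\ell'_i$ to be the bisector line of $c_{i2}$, invoking the wide-spread condition to guarantee pairwise crossings, and using \Cref{la:orderingGadget} to pin down the projection order of the $p(a^{i2}_{k2})$ along $\ell'_i$---matches the paper exactly. The genuine gap is the final transfer from projection order to intersection order.

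You propose a quantitative slack argument: bound the distance from $p(a^{i2}_{k2})$ to each of $\ell'_i$, $\ell'_k$, then bound the displacement of $\pi_{ik}$ from the projection of $p(a^{i2}_{k2})$, and finally argue this displacement is ``strictly smaller than the gaps between consecutive apex projections.'' But nothing in $G_L$, equiangularity, or wide-spreadness gives any lower bound on those gaps. In an arbitrary realization the projected apexes may cluster as tightly as they like, while the displacement of $\pi_{ik}$ from the projection of $p(a^{i2}_{k2})$ scales with the distance from $p(c_{i2})$; the two quantities are simply not comparable. So the inequality you assert does not hold in general, and the ``bookkeeping'' you defer cannot be completed along these lines.

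The paper does \emph{not} attempt any such quantitative comparison. Instead it fixes a hypothetical inversion (projections say $j$ before $k$, intersections say $k$ before $j$) and runs a case split on whether the crossing point $p = \ell'_j \cap \ell'_k$ lies inside $c_{i2}$. If $p \notin c_{i2}$, the outer segments of $c_{j2}$ and $c_{k2}$ separate the relevant apexes, and the wide-spread bound together with \Cref{ob:pairsangles,ob:outerangle} yields a direct angular contradiction. If $p \in c_{i2}$, the argument is substantially more delicate: one analyses the four faces cut out by $\ell'_j$ and $\ell'_k$, localises $p(b^{i2}_{j2})$ to one of two faces, and then uses the parallel copies $c_{i1}, c_{i3}$ together with the local-order edges $E_{\text{LOI}}, E_{\text{LOD}}$ to force an impossible configuration. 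The triple $c_{i1}, c_{i2}, c_{i3}$ is used here, on the $i$ side, to rule out the ``$p \in c_{i2}$'' case---not, as you suggest, merely to orient the global slope order on the $k$ side. This case analysis is the real content of the lemma, and your proposal does not supply it.
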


\begin{proof}
From $\mathcal{C}$, we  construct an 
arrangement $\mathcal{L} = (\ell_1, \dots, \ell_n)$ 
of lines such that $\mathcal{D}(\mathcal{L}) = \mathcal{D}$ 
by setting $\ell_i$ to the line spanned by $u(c_{i2})$. Now, 
we show that this line arrangement indeed satisfies the description,
e.g., that the intersection order of the lines is as indicated by 
the description. 

All $a_{km'}^{im}$ and $b_{km'}^{im}$ form mutual couples with 
$c_{im}$. Thus, \Cref{la:orderingGadget} can be applied to them. 
It follows that the order of the projections of the apexes of 
the circular sectors is known. In particular, the order of 
projections of the $p(a_{j2}^{i2})$ onto $\ell_i$ is the order 
given by $\mathcal{D}$ and $p(b_{o_j2}^{i2})$ is projected 
between $p(a_{o_j2}^{i2})$ and $p(a_{o_{j+1}2}^{i2})$.

Now, we have to show that the order of intersections of the 
lines corresponds to the order of the projections of the 
$p(a_{j2}^{i2})$. This will be done through a contradiction.
We consider two circular sectors $c_{j2}$ and $c_{k2}$. 
Assume that the order of the projection of the apexes of 
$a_{j2}^{i2}$ and $a_{k2}^{i2}$ onto $\ell_i$ is  $p(a_{j2}^{i2})$, 
$p(a_{k2}^{i2})$, while the order of intersection of the lines is 
$\ell_k$, $\ell_j$. 

Note that by the definition of the edges of $G_L$, $c_{j2}$ and 
$c_{k2}$ share the apexes of $a_{j2}^{k2}$ and $a_{k2}^{j2}$, 
but there is no circular sector they both form a mutual couple with and thus 
the angle between their bisecting line segments is large.

There are two main cases to consider, based on the position 
of the intersection point $p$ of $\ell_{j}$ and $\ell_k$ 
relative to $c_{i2}$:

\paragraph*{Case one $p\notin c_{i2}$:}\label{subsec:caseone}
If $p$ does not lie in $c_{i2}$, then $\ell_j$ and 
$\ell_k$ divide $c_{i2}$ into three parts. 
Let $s_j, s_k$ be the outer line segments of $c_{j2}$ and 
$c_{k2}$ that lie in the middle part of this decomposition.
A schematic of this situation can be seen in \Cref{fig:nonintersect}.

\begin{figure}
\center
\subfloat[Case one, $a_{k2}^{i2}$ cannot reach $a_{j2}^{i2}$.]{
\includegraphics[width=0.48\textwidth]{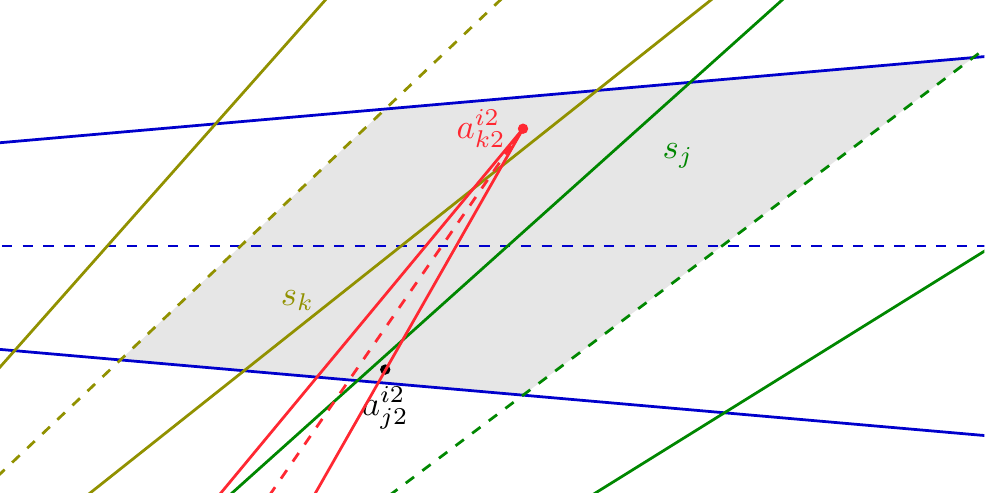}

\label{fig:nonintersect}
}
\subfloat[Case two, $b_{j2}^{i2}$ cannot lie in $F_1$ or $F_3$.]{
\includegraphics[width=0.48\textwidth]{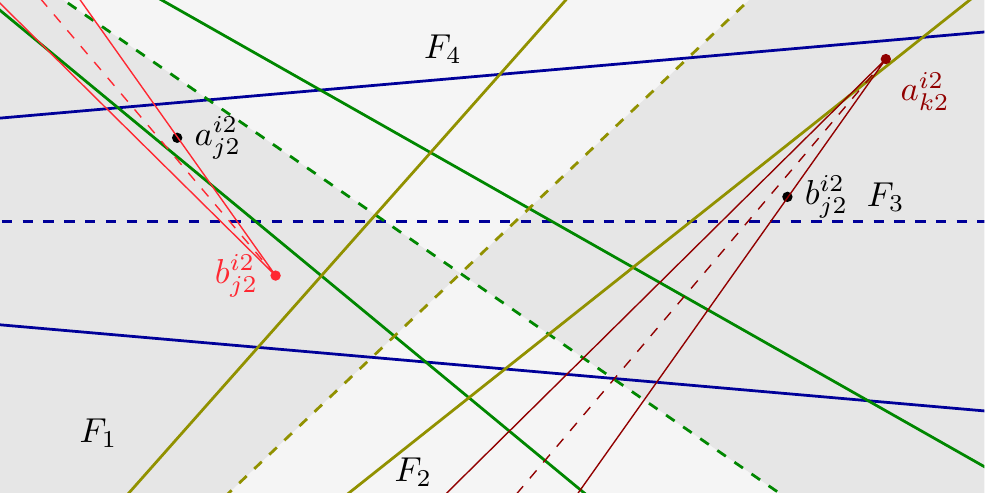}

\label{fig:nof1f3}
}
\end{figure}

From \Cref{ob:outerangle} and since $\mathcal{C}$ is an 
equiangular, wide spread arrangement it follows that 
\(
|\pi - \gamma(s_j, u(c_i))| > 3\alpha/2 \text{ and }
|\pi-\gamma(s_k, u(c_i))| >3\alpha/2
\).

In order to have an intersection order that differs from the 
projection order, the circular sector $a_{k2}^{i2}$ 
has to reach $p(a_{j2}^{i2})$. The latter point is projected 
to the left of $a_{k2}^{i2}$ but lies right of $s_k$. 
The directed line segment $d$ from $p(a_{k2}^{i2})$ to $p(a_{k2}^{j2})$ has to intersect $s_j$ 
and $s_k$, and thus it has to hold that 
$|\pi -\gamma(d, u(c_{i2}))| \geq 3\alpha/2$. The line segment $d$
has to lie inside of $a_{k2}^{i2}$, which is only possible if 
$|\pi-\gamma(u(a_{k2}^{i2}), u(c_i))| > \alpha$. However, this is 
a contradiction to $|\pi-\gamma(u(a_k^i), u(c_i))| \leq \alpha$, 
which follows from \Cref{ob:pairsangles}.

\paragraph*{Case two $p\in c_{i2}$:}
\Wlog, let $u(c_{i2})  = \lambda \cdot (1,0)$, $\lambda>0$, 
and let $\mathcal{F}=\{F_1, F_2, F_3, F_4\}$ be the decomposition 
of the plane into faces induced by $\ell_j$ and $\ell_k$. 
Here, $F_1$ is the face with $p(c_{i2})$, and the faces are 
numbered in counter-clockwise order.

We consider the possible placements of $p(b_{j2}^{i2})$ 
in one of the face. First, we show that $p(b_{j2}^{i2})$ cannot 
lie in $F_1$ or in $F_3$. 
From the form of $E_\text{GO}$, we know that $p(a_{j2}^{i2})$ 
has to be projected left of $p(b_{j2}^{i2})$ and $p(a_{j2}^{i2})$ 
has to lie inside of $b_{j2}^{i2}$; see \Cref{fig:nof1f3} for 
a schematic of the situation.
If $p(b_{j2}^{i2})$ lies in $F_1$, the line segment in $b_{j2}^{i2}$ 
that connects $p(b_{j2}^{i2})$ and $p(a_{j2}^{i2})$ has to cross 
an outer line segment of $c_{j2}$. This yields the same contradiction 
as in the first case.
If $p(b_{j2}^{i2})$ were in $F_3$, an analogous argument 
holds for $p(b_{j2}^{i2})$, which has to lie inside of $a_{k2}^{i2}$.

This leaves $F_2$ and $F_4$ as possible positions for $b_{j2}^{i2}$. 
\Wlog, let $b_{j2}^{i2}$ be located in $F_4$. We divide $c_{j2}$ 
and $c_{k2}$ by $\ell_k$ or $\ell_j$, respectively, 
into two parts, and denote the parts containing the line 
segments that are incident to $F_4$ by $J$ and $K$.
Then, again by using that the arrangement is wide spread, it can 
be seen that $p(a_{j2}^{i2})$ and $p(a_{k2}^{i2})$ are located in $J$ 
and $K$. The possible placement is visualized in \Cref{fig:f4nodiagonal}.

 \begin{figure}
 \subfloat[The localization of $a_{j2}^{i2}$ and $a_{k2}^{i2}$.]{
\includegraphics[width=0.49\textwidth]{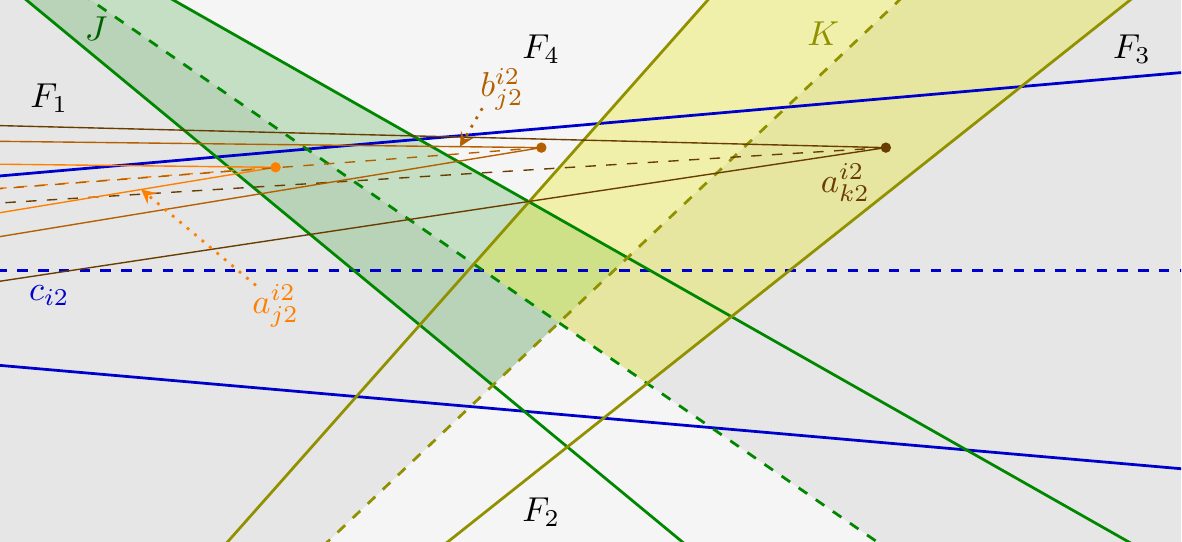}
 \label{fig:f4nodiagonal}
}
  \subfloat[$p$ can not lie in $c_{i1}$.]{
\includegraphics[width=0.49\textwidth]{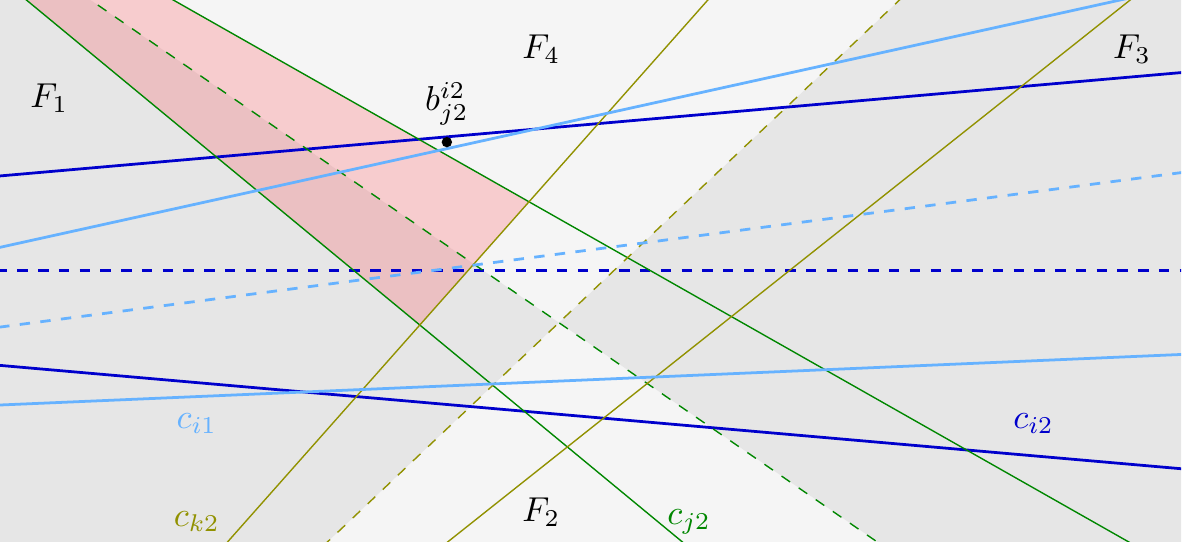}
 \label{fig:f4overlap}
}

 \end{figure}

The argument so far yields that if $p\in c_{i2}$, then the 
intersection order of $\ell_{j}$ and $\ell_k$ with $\ell_i$ 
is the same as the order of projection if $\ell_i$ lies above $p$,
and is the inverse order if $\ell_i$ lies below $p$. 
The uncertainty of this situation is not desirable.
By considering the circular sectors $c_{i1}$ and $c_{i3}$, 
we will now show that such a situation cannot occur.

First, we show that $c_{i1}$ and $c_{i3}$ cannot contain the 
intersection point of $\ell_j$ and $\ell_k$.
\Wlog, assume that the intersection point lies in $c_{i1}$. 
Then, $b_{j2}^{i1}$ is included in either $F_2$ or $F_4$. 
Consider the case that $b_{j2}^{i1}$ lies in $F_4$. 
Since $u(c_{i2}) = \lambda \cdot (1,0)$ and since 
one of the outer line segments of $c_{i2}$ has to lie beneath $p$, 
there is only one outer line segment of $c_{i2}$ that intersects 
$F_4 \setminus (J\cup K)$, $J$ and $K$. There are at most two 
intersection points of this outer line segment with 
$\partial c_{i1}$.  This implies that there is no intersection 
point of $\partial c_{i2}$ and $\partial c_{i1}$ in at least 
one of $J$, $K$, and $F_4 \setminus (J\cup K)$. If there is 
no intersection point, then $c_{i1}$ and $c_{i2}$ overlap in 
this interval. \Wlog, let this area be $J$, and let $c_{i1} \cap J$ 
be fully contained in $c_{i2} \cap J$. Then, $p(a_{j1}^{im})$ 
cannot be placed. Consequently, this situation is not possible. 
The argument is depicted in \Cref{fig:f4overlap}.

If $p(b_{j2}^{i1})$ was included in $F_2$, 
then the order of projection of $p(a_{k2}^{i2})$ and 
$p(a_{j2}^{i2})$ would be the same order as the order of 
intersections of $\ell_j$ and $\ell_k$ with a parallel line 
to $\ell_i$ that lies below $\ell_i$. This order is the 
inverse order of the order of projection in $c_{i2}$. 
Since the order of the projection as defined by $E_\text{GO}$ 
depends only on $k$ and $i$, the order of projection of 
$p(a_{j2}^{im})$ and $p(a_{k2}^{im})$ has to be the same in 
all $c_{im}$. This implies that $p(b_{k2}^{i1})$ is 
not included in $F_2$.

Now, we know that $c_{i1}$ and $c_{i3}$ do not contain 
the intersection point. This implies that the argument 
from the case $p \notin c_{i2}$ can be applied to them 
and the order of intersection in $c_{i1}$ and $c_{i3}$ 
is the same as the order of the projections of 
$p(a_{j2}^{i1})$ and $p(a_{k2}^{i1})$. This order is the 
same in all three $c_{im}$, and thus the bisectors of $c_{i1}$ 
and $c_{i3}$ have to lie on the same side of the intersection 
point. Furthermore, the points $p(a_{j2}^{i1})$ and 
$p(a_{j2}^{i3})$ have to lie in $J$ but outside of $c_{i2}$. 
This implies that $\ell_{i1}$ and $\ell_{i3}$ both intersect 
$\ell_j$ and $\ell_k$ either before $\ell_i$ or after $\ell_i$, 
while $p(b_{j2}^{i2})$ lies in $F_4$.

The edges for the local order define that the order of 
projection onto $\ell_j$ is $p(a_{j2}^{i1})$, $p(a_{j2}^{i2})$,
$p(a_{j2}^{i3})$ (or the reverse), and the analogous statement 
holds for $\ell_k$. This order is not possible with $c_{i1}$ 
and $c_{i3}$, both lying above or below $c_{i2}$, which 
implies that the intersection point cannot lie in $c_{i2}$. 
Since the order of intersection is the same as the order of the 
projection, if $p \notin c_{i2}$ and a situation with $p\in c_{i2}$ 
is not possible, we have shown that 
$\mathcal{D}(\mathcal{L}) = \mathcal{D}$.
\end{proof}

With the tools from above, we can now give the proof of the main result 
of this section:

\begin{theorem}
\SECTOR is hard for \ER.
\end{theorem}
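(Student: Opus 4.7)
The plan is to assemble a polynomial-time reduction from \STRETCH, which is known to be \ER-hard by Mn\"ev's theorem as restated by Matou\v{s}ek~\cite{matousek_intersection_2014,mnev_realizability_1985}. Given an input combinatorial description $\mathcal{D}$ in which every $o_j^i$ is a singleton, I would invoke \Cref{co:cs} to construct the graph $G_L = (V_L, E_L)$. The construction itself is already observed to run in polynomial time, so the only remaining task is to establish the equivalence between stretchability of $\mathcal{D}$ and realizability of $G_L$ as the generalized transmission graph of an equiangular, wide spread arrangement of circular sectors.

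For the two directions of this equivalence, I would simply cite the two lemmas that precede the theorem. \Cref{la:cs=>} provides the forward direction: if $\mathcal{D}$ is realizable by a line arrangement $\mathcal{L}$, then the placement of the $c_{im}$ along the lines of $\mathcal{L}$, together with the careful placement of the $a$- and $b$-gadgets inside the bounding disk $D$, yields an equiangular, wide spread arrangement whose generalized transmission graph is exactly $G_L$. \Cref{la:cs<=} provides the converse: from any equiangular, wide spread realization $\mathcal{C}$ of $G_L$, extracting the bisecting lines of the $c_{i2}$ produces a line arrangement whose combinatorial description coincides with $\mathcal{D}$, using \Cref{la:orderingGadget} to pin down the projection order and the case analysis on $p \in c_{i2}$ versus $p \notin c_{i2}$ to rule out any order discrepancy.

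Putting these pieces together, the map $\mathcal{D} \mapsto G_L$ is a polynomial-time many-one reduction from \STRETCH to \SECTOR, which establishes \ER-hardness of the latter. There is no real obstacle remaining at this stage: the technical content, namely forcing intersections with the $a$-gadgets, forcing projection orders with the $b$-gadgets via \Cref{la:orderingGadget}, and ruling out pathological intersections through the wide spread hypothesis, has already been discharged in \Cref{co:cs,la:cs=>,la:cs<=}. The only point requiring a brief explicit mention is that the reduction produces an instance of \SECTOR and not of the broader problem, i.e., that the arrangement produced in \Cref{la:cs=>} is genuinely equiangular and wide spread, which is immediate from the way $\alpha$ is chosen there as $\min\{\alpha_\text{min}/2,\beta\}$.
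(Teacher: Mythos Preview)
Your proposal is correct and matches the paper's own proof, which simply states that the theorem follows from \Cref{co:cs,la:cs=>,la:cs<=}. Your additional remark that the arrangement produced in \Cref{la:cs=>} is indeed equiangular and wide spread is a helpful clarification but does not depart from the paper's approach.
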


\begin{proof}
The theorem follows from \Cref{co:cs,la:cs=>,la:cs<=}.
\end{proof}

\section{Conclusion}
We have defined the new graph class of 
generalized transmission graphs as a model for 
directed antennas with arbitrary shapes. We showed 
that the recognition of generalized transmission graphs of line segments and a special form of
circular sectors is \ER-hard. 

For the case of circular sectors, we needed to impose
certain conditions on the underlying arrangements.
The wide spread condition in particular 
seems to be rather restrictive. We assume that this condition 
can be weakened, if not dropped, while the problem remains \ER-hard. 

Ours are the first \ER-hardness results on 
directed graphs that we are aware of. We believe that this
work could serve as a starting point 
for a broader investigation into the recognition problem
for geometrically defined directed graph models, and to
understand further what makes these problems hard.

\textbf{Acknowledgments.}
We would like to thank an anonymous reviewer for pointing out a mistake 
in \Cref{ob:pairsangles}.

\bibliographystyle{abbrv}
\bibliography{RecognizingTransmissiongraphs}

\newpage
\appendix

\section{Missing proofs and constructions}

\subsection{Full construction for 
\texorpdfstring{SECTOR}{\SECTOR}}\label{ap:edgedef}

Let the vertices of the construction be defined as in \Cref{co:cs}.
We divide the edges of the graph into \emph{categories}. 
The first category $E_I$ contains the edges that enforce an 
intersection of two circular sectors $c_{im}$ and $c_{km'}$ for $k < l$.
\[
E_I= \Big\{\big(c_{im}, a_{km'}^{im}\big), \big(c_{im}, a^{km'}_{im}\big)
\,\Big\vert\, i\neq k\Big\}.
    \]
The edges $E_C$ enforce that each $a_{km'}^{im}$ and each 
$b_{km'}^{im}$ forms a mutual couple with $c_{im}$.
\[
	E_C= \Big\{\big(a_{km'}^{im}, c_{im}\big), 
    \big(c_{im}, b_{km'}^{im}\big), 
    \big(b_{km'}^{im}, c_{im}\big) \,\Big\vert\, i \neq k\Big\}.
  \]
The edges of $E_\text{GO}$ will enforce the order of the  
projection of the apexes of $a_{o_km'}^{im}$, $a_{o_lm''}^{im}$,
$b_{o_km'}^{im}$, and $b_{o_lm''}^{im}$ for $k > l$ onto the 
bisector of $c_{im}$. They are chosen such that $p(a_{o_km'}^{im})$ 
will be projected closer to $p(c_{im})$ than $p(a_{o_lm''}^{im})$, 
for $k < l$.  Also included in $E_\text{GO}$ are edges that enforce 
that all $p(a_{\mathbf{im}}^{\mathbf{o_km'}})$ are included 
in the circular sectors $a_{o_lm''}^{im}$ and $b_{o_lm''}^{im}$.
\begin{alignat*}{2}
E_\text{GO} &=\phantom{\cup}
\Big\{(a_{o_km'}^{im}, a_{o_lm''}^{im}), (a_{o_km'}^{im},
 a^{\mathbf{o_lm''}}_{\mathbf{im}}),
(a_{o_km'}^{im}, b_{o_lm''}^{im}),&&\\
&\phantom{=\cup
\Big\{}
(b_{o_km'}^{im}, a_{o_lm''}^{im}),
(b_{o_km'}^{im}, a^{\mathbf{o_lm''}}_{\mathbf{im}}) 
&&\Big\vert\, i\neq k,
k>l\Big\}.
\end{alignat*}
The last two categories of edges will enforce the projection order 
of the apexes of $a_{o_k1}^{im}$, $a_{o_k2}^{im}$, $a_{o_k3}^{im}$, 
and $b_{o_k1}^{im}$, $b_{o_k2}^{im}$, $b_{o_k3}^{im}$ onto the 
bisector of $c_{im}$. This order is $a_{o_k1}^{im}$, $b_{o_k1}^{im}$,
$a_{o_k2}^{im}$, $b_{o_k2}^{im}$, $a_{o_k3}^{im}$ $b_{o_k3}^{im}$, 
if $o_k > i$, and the inverse order, otherwise. The edges for 
the first case are $E_\text{LOI}$, and the edges for the second case 
are $E_\text{LOD}$. We set
	\begin{alignat*}{2}
	E_\text{LOI} &=
    \phantom{\cup}\Big\{(a_{o_km'}^{im}, a_{o_km''}^{im}), 
	(a_{o_km'}^{im}, a^{\mathbf{o_km''}}_{\mathbf{im}}),  \\
	&\phantom{=\cup\Big\{}(a_{o_km'}^{im}, b_{o_km''}^{im}), 
	(b_{o_km'}^{im}, b_{o_km''}^{im})  &&\Big\vert\, i\neq k, m''<m', o_k>i\Big\}\\
	&\phantom{=}\cup\Big\{(b_{o_km'}^{im}, a_{o_km''}^{im}), 
	(b_{o_km'}^{im}, a^{\mathbf{o_km''}}_{\mathbf{im}}) &&\Big\vert\, i\neq k,
    m''\leq m', o_k>i\Big\}
	\intertext{and}
	E_\text{LOD} &=
	\phantom{\cup} \Big\{(a_{o_km'}^{im}, a_{o_km''}^{im}),  
	(a_{o_km'}^{im}, a^{\mathbf{o_km''}}_{\mathbf{im}}),\\
		&\phantom{=\cup\Big\{} (a_{o_km'}^{im}, b_{o_km''}^{im}), 
	(b_{o_km'}^{im}, b_{o_km''}^{im})  &&\Big\vert\, i\neq k, m''>m', o_k<i\Big\}\\
	&\phantom{=}\cup \Big\{(b_{o_km'}^{im}, a_{o_km''}^{im}),
	(b_{o_km'}^{im}, a^{\mathbf{o_km''}}_{\mathbf{im}}) &&\Big \vert\, i\neq k,
    m''\geq m', o_k<i\Big\}.
\end{alignat*}
The set of all edges is defined as 
\[
E_L = E_I\cup E_C\cup E_\text{GO} \cup E_\text{LOI} \cup E_\text{LOD}.
\]

\subsection{Remaining proof for 
\texorpdfstring{\Cref{la:cs=>}}{Lemma 8}}\label{ap:=>}
\begin{lemma}
The generalized transmission graph of the arrangement 
$\mathcal{C}$ of circular sectors constructed in \Cref{la:cs=>} is $G_L$
\end{lemma}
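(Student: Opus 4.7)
The plan is to verify the equality of the generalized transmission graph of $\mathcal{C}$ and $G_L$ by checking, for each category of edges defined in \Cref{ap:edgedef}, both that the required containments hold in the construction and that no additional containments arise. Since the edge sets are $E_I$, $E_C$, $E_{\mathrm{GO}}$, $E_{\mathrm{LOI}}$, and $E_{\mathrm{LOD}}$, I would treat them in this order, because the intersection edges are essentially by construction, the couple edges rely on the radius/direction choice, and the ordering edges require the full geometric layout of the apexes.

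For $E_I$, the edges $(c_{im}, a_{km'}^{im})$ and $(c_{im}, a_{im}^{km'})$ hold because $p(a_{km'}^{im})$ was placed on $\ell_{im}$ at distance $\delta$ from the intersection of $\ell_{im}$ and $\ell_{km'}$ (so it lies on the bisector of $c_{im}$ and hence inside $c_{im}$), and symmetrically $p(a_{im}^{km'})$ lies on $\ell_{km'}$ close enough to the same intersection to belong to $c_{im}$. For $E_C$, the edges $(a_{km'}^{im}, c_{im})$, $(c_{im}, b_{km'}^{im})$, and $(b_{km'}^{im}, c_{im})$ follow from two facts: the radius of $a_{km'}^{im}$ was set to $\dist(p(a_{km'}^{im}), p(c_{im})) + \eps$ with direction $-u(c_{im})$, so $p(c_{im})$ sits on the bisecting segment of $a_{km'}^{im}$, and $p(b_{km'}^{im})$ lies on $\ell_{im}$ between consecutive bounding boxes, hence inside $c_{im}$, with a symmetric radius choice ensuring $p(c_{im}) \in b_{km'}^{im}$.

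For $E_{\mathrm{GO}}$, I would argue that because all apexes $p(a_{o_lm''}^{im})$, $p(b_{o_lm''}^{im})$ for $l < k$ lie on $\ell_{im}$ between $p(c_{im})$ and $p(a_{o_km'}^{im})$, and because each $a_{o_km'}^{im}$ and $b_{o_km'}^{im}$ is a sector directed $-u(c_{im})$ with radius reaching $p(c_{im})+\eps$, the whole portion of $\ell_{im}$ between the apex and $p(c_{im})$ is swept, so every such apex is contained. The apexes $p(a_{\mathbf{im}}^{\mathbf{o_lm''}})$ lie on $\ell_{o_lm''}$ near the intersection with $\ell_{im}$, and by choosing $\delta$ small enough (as stipulated in the construction) they too fall inside $a_{o_km'}^{im}$ and $b_{o_km'}^{im}$. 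The analogous local-order edges of $E_{\mathrm{LOI}}$ and $E_{\mathrm{LOD}}$ follow from the relative placement of $p(c_{i1}), p(c_{i2}), p(c_{i3})$ along $\partial D$ at offsets $\tau$: the intersections of $\ell_{im}$ with $\ell_{o_km''}$ for $m'' < m'$ (resp.\ $m'' > m'$) lie closer to $p(c_{im})$ than the intersection with $\ell_{o_km'}$ exactly when $o_k > i$ (resp.\ $o_k < i$), which matches the case split in the edge definition.

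The main obstacle, and the part I expect to require the most care, is the \emph{converse} direction: showing that no unintended edges appear. Three sub-arguments will do this. First, because $\alpha \leq \alpha_{\min}/2$, a sector $a_{km'}^{im}$ with direction $-u(c_{im})$ is contained within a narrow wedge around $\ell_{im}$, so its interior cannot reach apexes lying on a different line $\ell_{o_lm''}$ unless those apexes were specifically positioned within the $\delta$-neighbourhood used in the construction; this rules out cross-line containments beyond those in $E_{\mathrm{GO}}$. Second, because the $b_{km'}^{im}$ apexes were placed strictly between consecutive bounding boxes $B(c_{km'})$, they do not lie inside any other $c_{km''}$, so no spurious $E_I$-type edges arise. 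Third, the $c_{im}$ themselves cannot contain apexes $p(c_{jm''})$ with $j \neq i$, because all $c_{jm''}$ apexes lie on $\partial D$ while the interiors of the $c_{im}$ touch $\partial D$ only at their own apex. Collecting these observations, every containment in $\mathcal{C}$ corresponds to an edge of $G_L$, which together with the forward direction above gives the required equality.
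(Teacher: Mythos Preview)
Your forward direction (showing that every edge of $G_L$ is realized) follows essentially the same reasoning as the paper, and is fine. The converse direction, however, is where your argument is incomplete.

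You organize the ``no extra edges'' part into three sub-arguments: cross-line containments from $a_{km'}^{im}$, spurious $(c,b)$-edges, and $(c,c)$-edges. This leaves several cases uncovered. First, you never argue that a sector $a_{o_lm''}^{im}$ (or $b_{o_lm''}^{im}$) cannot contain an apex $p(a_{o_km'}^{im})$ lying \emph{further} from $p(c_{im})$ along the same line $\ell_{im}$; your ``narrow wedge'' point only addresses cross-line containments. The paper handles this with the observation that $\alpha\le\pi/4$ together with the direction $-u(c_{im})$ prevents any such sector from reaching an apex at larger distance from $p(c_{im})$. Second, you do not rule out edges of the form $(a_{km'}^{im},c_{jm''})$ or $(b_{km'}^{im},c_{jm''})$ for $j\neq i$ (or even $j=i$, $m''\neq m$). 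Third, your ``Second'' point only treats apexes of $b$-sectors, not of $a$-sectors, so spurious $(c_{*},a_{*})$ edges are not excluded either. Finally, the claim that ``the interiors of the $c_{im}$ touch $\partial D$ only at their own apex'' is not quite right as stated: the far end of the bisecting chord also lies on $\partial D$, and the three apexes $p(c_{i1}),p(c_{i2}),p(c_{i3})$ sit at distance $\tau$ from one another, so some care is needed.

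The paper's organization is cleaner here: it first records that every $a_{km'}^{im}\cap D$ and $b_{km'}^{im}\cap D$ is contained in the bounding box $B(c_{im})$, and that the boxes $B(c_{im})$ are pairwise disjoint on $\partial D$. This single containment fact immediately kills all extra edges ending at a $c$-vertex. It then notes that only the intended apexes lie in $c_{im}$, disposing of extra edges starting at $c_{im}$, and finally uses $\alpha\le\pi/4$ to show that no $a$- or $b$-sector can reach an apex at larger distance from $p(c_{im})$, while all closer apexes are already accounted for by $E_{\mathrm{GO}}\cup E_{\mathrm{LOI}}\cup E_{\mathrm{LOD}}$. Recasting your converse direction along these lines would close the gaps.
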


\begin{proof}
As $\delta$ is chosen small enough that $a_{km'}^{im}$ and 
$a_{\mathbf{im}}^{\mathbf{km'}}$ lie in $c_{im}$, the edges of 
$E_I$ are created. Since $b_{km'}^{im}$ and $a_{km'}^{im}$ have 
the inverse direction of $c_{im}$ and the radii are large enough, 
$p(c_{im})$ is included in $a_{km'}^{im}$ and in $b_{km'}^{im}$. Hence 
all edges in $E_C$ are created. 

By the choice of the radii and the direction, $a_{o_km'}^{im}$ 
includes all apexes of circular sectors that lie on $\ell_{im}$ 
and closer to $p(c_m)$ than $p(a_{o_km'}^{im})$.  
Furthermore, $\delta$ is small enough such that all 
$a_{\mathbf{im}}^{\mathbf{o_lm}''}$, $l < k$, are 
included in $a_{o_km'}^{im}$. This implies that 
edges from $E_\text{GO}$ are present in the generalized 
transmission graph of $\mathcal{C}$.

The only edges that have not been considered yet are the 
edges in $E_\text{LOI}$ and $E_\text{LOD}$. 
For a circular sector $a_{o_km'}^{im}$ with $o_k > i$, 
the slope of $\ell_{o_k}$ is larger than the slope of $\ell_i$.  
By the counter-clockwise construction, $\ell_{o_k1}$ lies above 
$\ell_{o_k2}$. This implies that the intersection point of 
$\ell_{o_k1}$ and $\ell_{im}$ lies closer to $p(c_{im})$ than 
the intersection points with $\ell_{o_k2}$ or $\ell_{o_k3}$. 
The presence of the edges can now be seen by the same argument 
as for the edges of $E_\text{GO}$.  Symmetrical considerations 
can be made for the edges of $E_\text{LOD}$. 

It remains to show that no additional edges are created. Note that 
all apexes of the circular sectors lie inside of $D$ and that all 
$a_{km'}^{im}\cap D$ and $b_{km'}^{im} \cap D$ are included in the 
boxes $B(c_{im})$.

Since only the apexes of $a_{km'}^{im}$, 
$a_{\mathbf{im}}^{\mathbf{km'}}$, and $b_{km'}^{im}$ lie in $c_{im}$, 
there are no additional edges starting at $c_{im}$.
The rectangles $B(c_{im})$ are disjoint on the boundary of $D$ and all 
$a_{km'}^{im} \cap D$ and $b_{km'}^{im}\cap D$ lie inside of $B(c_{im})$. 
This implies that there are no additional edges ending at $c_{im}$.
Now, we have to consider additional edges starting at $a_{km'}^{im}$ and 
$b_{km'}^{im}$. Note that  $\alpha \leq \pi/4$ enforces that no circular 
sector $a_{km'}^{im}$ or $b_{km'}^{im}$ can reach an apex having a 
larger distance to $p(c_{im})$. Also, note that there are edges 
for all circular sectors with smaller distances in $E_\text{GO}$, 
$E_\text{LOD}$ or $E_\text{LOI}$. This covers all possible 
additional edges. 
\end{proof}
\end{document}